\newtheorem{theorem}{Theorem}
\title{\bf Estimating Boltzmann Averages for Protein Structural Quantities Using Sequential Monte Carlo}
\author{Zhaoran Hou and Samuel W.K. Wong\footnote{Author for correspondence: samuel.wong@uwaterloo.ca} \\ University of Waterloo}
\date{October 18, 2022}                                           
\begin{document}
\maketitle

\begin{abstract}
Sequential Monte Carlo (SMC) methods are widely used to draw samples from intractable target distributions. Particle degeneracy can hinder the use of SMC when the target distribution is highly constrained or multimodal. As a motivating application, we consider the problem of sampling protein structures from the Boltzmann distribution. This paper proposes a general SMC method that propagates multiple descendants for each particle, followed by resampling to maintain the desired number of particles. Simulation studies demonstrate the efficacy of the method for tackling the protein sampling problem. As a real data example, we use our method to estimate the number of atomic contacts for a key segment of the SARS-CoV-2 viral spike protein.

\vspace{9pt}
\noindent {\it Key words and phrases:}
Monte Carlo methods, particle filter, protein structure analysis, SARS-CoV-2.
\par
\end{abstract}\par

\def\thefigure{\arabic{figure}}
\def\thetable{\arabic{table}}

\renewcommand{\theequation}{\thesection.\arabic{equation}}

\fontsize{12}{14pt plus.8pt minus .6pt}\selectfont

\section{Introduction}\label{section: introduction}

Sequential Monte Carlo (SMC) methods, also known as particle filters, are simulation-based Monte Carlo algorithms for sampling from a target distribution.  SMC originated from on-line inference problems in dynamical systems, where observations arrive sequentially and interest lies in the posterior distribution of hidden state variables \citep{liu1998sequential}; since then, SMC methods have been used to solve a wide range of practical problems.   

This paper proposes a SMC method that generalizes the propagation and resampling strategy of \cite{fearnhead2003line}, by constructing a sequence of upsampling and downsampling steps as we shall subsequently define. Our method is motivated by the problem of sampling 3-D structures of proteins from the Boltzmann distribution, which is a challenging task due to atomic interactions and constraints (see Section \ref{section: motivation} for scientific background). These constraints can exacerbate particle degeneracy when using SMC to obtain samples, for which our method provides a solution.

We begin with a review of the relevant SMC concepts following \cite{doucet2001introduction}.
Assume we have random variables $(\mathbf{x}_0, \dots, \mathbf{x}_T)$ (denoted by $\mathbf{x}_{0: T}$)  with continuous support $\mathcal{X}^{(T+1)}$, and we wish to draw samples from the target distribution  $p(\mathbf{x}_{0: T})$. Let $f_{T}: \mathcal{X}^{(T+1)} \rightarrow \mathbb{R}^{n_{f_{T}}}$ denote a square integrable function of interest, then its expectation with respect to $p(\mathbf{x}_{0: T})$ is given by
\begin{eqnarray}
E_p\left[f_{T}\left(\mathbf{x}_{0: T}\right)\right] = \int f_{T}\left(\mathbf{x}_{0: T}\right) p\left(\mathbf{x}_{0: T}\right) d \mathbf{x}_{0: T}. \label{eq:smcintegral}
\end{eqnarray}
Since this integration is usually analytically intractable, the goal is to produce a set of particles $\{\mathbf{x}_{0: T}^{(n)}\}_{n=1}^N$ with weights $\{w(\mathbf{x}_{0: T}^{(n)})\}_{n=1}^N$ such that an estimate to the integral is given by
\begin{equation}
E_p\left[f_{T}\left(\mathbf{x}_{0: T}\right)\right] 
\approx \frac{\sum_{n=1}^N f_{T}(\mathbf{x}_{0: T}^{(n)})w(\mathbf{x}_{0: T}^{(n)})}{\sum_{n=1}^N w(\mathbf{x}_{0: T}^{(n)})},
\label{eq:SMCapproximation}
\end{equation}
which requires the weights to be proper with respect to $p(\mathbf{x}_{0: T})$ \citep{liu1998sequential,liu2001monte}, i.e., $$E\left[f_{T}(\mathbf{x}_{0: T}^{(n)})w(\mathbf{x}_{0: T}^{(n)})\right] \propto E_p\left[f_{T}(\mathbf{x}_{0: T})\right].$$ Often, $p(\mathbf{x}_{0:T})$ does not adopt a form from which we can directly sample (or use importance sampling efficiently) in practice (e.g.,  \cite{jacquier2002bayesian,carvalho2010particle}). In this case, a set of auxiliary distributions $\{p_t(\mathbf{x}_{0: t})\}_{t=0}^T$ can be introduced, 
with $p_T(\mathbf{x}_{0:T}) = p(\mathbf{x}_{0:T})$, to facilitate sequential sampling \citep{liu2001monte}; note that $p_t(\mathbf{x}_{0: t})$ does not need to equal $\int p(\mathbf{x}_{0: T})d\mathbf{x}_{t+1: T}$ when $t<T$. To then construct $\{\mathbf{x}_{0:T}^{(n)}\}_{n=1}^N$, SMC generates particles according to the auxiliary distributions via a sequence of propagation and resampling steps. Assume a set of weighted particles $\{\mathbf{x}_{0:t-1}^{(n)}\}_{n=1}^N$ have been sampled from $p_{t-1}(\mathbf{x}_{0: t-1})$ for $0 < t \leq T$; for each existing  $\mathbf{x}_{0:t-1}^{(n)}$, the propagation step samples $\mathbf{x}_{t}$ and appends it to the existing particle to form  $(\mathbf{x}_{0:t-1}^{(n)}, \mathbf{x}_{t})$ as a weighted sample from $p_{t}(\mathbf{x}_{0:t})$. Afterwards, a resampling step may be done to preserve a set of more evenly-weighted particles. Distinct particles with more evenly-distributed weights are desired to better represent the target distribution and thus reduce the Monte Carlo variance of the estimates in Equation \eqref{eq:SMCapproximation}.

Sequential importance sampling with resampling (SISR) is a common framework to implement propagation with the help of importance distributions $\eta(\mathbf{x}_0), \eta(\mathbf{x}_1\mid \mathbf{x}_0),\dots, \eta(\mathbf{x}_T\mid \mathbf{x}_{0:T-1})$ and resampling \citep{liu1995blind, liu1998sequential}, as summarized in Algorithm \ref{SISR}. 
To briefly note some key features of SISR, Step 1 samples one descendant for each particle and Step 3 resamples from the propagated particles. If Step 3 is omitted, the SISR framework reduces to sequential importance sampling (SIS). 
The necessity of Step 3 depends on the importance weights: if the importance weights are all constant, resampling only reduces the distinction of the particles and thus increases Monte Carlo variance. However, the importance weights are usually uneven in practice; then without Step 3, some of the importance weights evolving in Step 2 may decay to zero along with the propagation, which is known as particle degeneracy. 

Many resampling schemes for Step 3 have been proposed to tackle particle degeneracy and maintain more even weights. \cite{gordon1993novel} adopted multinomial sampling with i.i.d.~draws; \cite{kitagawa1996monte} proposed stratified resampling which lines up the importance weights, divides the interval into equal parts and uniformly samples from each subinterval; \cite{liu1998sequential} proposed residual resampling, which implements random sampling after retaining copies of current particles based on the weights. \cite{li2022stratification} summarized some previous resampling schemes, including the three aforementioned ones, and showed the equivalency of optimal transport resampling and stratified resampling along with their optimality in one dimensional cases. The resampling schemes update the sample weights while preserving the proper weighting condition, but differ in the resulting Monte Carlo variance.
Thus, schemes that minimize the resampling variance are preferred.
Further, we note that these resampling schemes do not thoroughly solve the particle degeneracy of SISR for all situations. For example as seen later in the main application of this paper, the particle degeneracy encountered when sampling protein backbone segments makes SISR inapplicable: the particle weights can decay to zero so rapidly that there could be no particles with positive weights after just a few propagation and resampling steps. To deal with this issue naively, we would need to exponentially increase $N$ with the length of the protein segment to guarantee that SISR can successfully complete, which comes with an enormous computational burden.

\begin{algorithm}[t!]
\caption{Sequential importance sampling with resampling}\label{SISR}
Require: particle size $N$\;

Initialization: Sample $\{\mathbf{x}_{0}^{(n)}\}_{n=1}^N$ from $\eta(\mathbf{x}_0)$, and the weight $w(\mathbf{x}_{0}^{(n)}) \propto p_0(\mathbf{x}_0^{(n)})/\eta(\mathbf{x}_0^{(n)})$\;

\For{$t=1,\dots,T$}{
  Step 1: Sample $\widetilde{\mathbf{x}}_{t}$ from $\eta(\mathbf{x}_{t}\mid \mathbf{x}_{0:t-1}^{(n)})$ and set $\widetilde{\mathbf{x}}^{(n)}_{0:t} = (\mathbf{x}_{0:t-1}^{(n)}, \widetilde{\mathbf{x}}_{t})$ for each $n$\;
  Step 2: Evaluate the weight $w(\widetilde{\mathbf{x}}^{(n)}_{0:t}) \propto w(\mathbf{x}_{0:t-1}^{(n)})p_t(\widetilde{\mathbf{x}}^{(n)}_{0:t})/(p_{t-1}({\mathbf{x}}^{(n)}_{0:t-1})\eta(\widetilde{\mathbf{x}}_{t}\mid \mathbf{x}_{0:t-1}^{(n)}))$ for each $n$\;
  Step 3: Resample $N$ particles $\{\mathbf{x}_{0:{t}}^{(n)}\}_{n=1}^N$ from $\{\widetilde{\mathbf{x}}_{0:{t}}^{(n)}\}_{n=1}^N$ based on $\{w(\widetilde{\mathbf{x}}^{(n)}_{0:t})\}_{n=1}^N$ and update the weights $\{w(\mathbf{x}_{0:t}^{(n)})\}_{n=1}^N$\;
}
\end{algorithm}

\cite{fearnhead2003line} proposed an SMC algorithm for hidden Markov models restricted to a finite space $\mathcal{X}$ (i.e., $|\mathcal{X}| = M < \infty$), that circumvents some of the limitations of SISR. Their method explores every value of $\mathcal{X}$ and produces $M$ descendants during propagation for each of the $N$ particles. It then resamples $N$ distinct particles from the $MN$ particles such that the particle weights minimize the expected squared error loss function, and thus is an optimal resampling scheme.
This SMC algorithm has also been adopted in subsequent research due to its useful features. For example, \cite{zhang2007monte} investigated its application in sampling protein structures with a simplified discrete-state representation for the positions of the amino acids in a protein; in this context, the weight indicates the plausibility of a position under the given energy function. \cite{lin2008statistical} examined the effects of retaining the unfilled spaces of different shapes and sizes enclosed in the interior of proteins by adopting optimal resampling when considering two-dimensional lattice models of protein structures. \cite{fearnhead2007line} presented a variation on the optimal resampling by ordering the particles before resampling and illustrated the optimality of the extended resampling method in terms of minimizing the mean-square error for changepoint models. \cite{wong2018exploring} adopted the propagation idea with 
fixed $M=100$ (rather than exploring all possible values) to identify protein structures with low potential energy; their method omitted resampling and was primarily designed for structure prediction, so the produced samples are not properly weighted and cannot be used for Monte Carlo integration. 

We shall define \textit{upsampling} to be a propagation step that samples $MN$ descendants from $N$ particles ($M\ge1$), and \textit{downsampling} to be a resampling step which resamples $N$ particles from the $MN$ descendants. An \textit{upsampling-downsampling framework} 
combines these upsampling and downsampling features. The SMC method proposed in this paper, while motivated by the sampling problem in protein structures, is a generally applicable strategy for sampling from multivariate continuous distributions to compute Monte Carlo integrals. It may be especially effective when particle degeneracy cannot be solved by existing resampling schemes, e.g., for target distributions that are highly constrained or have many sharp local modes.

The remainder of the paper is laid out as follows. In Section \ref{section: motivation}, we introduce the scientific background of proteins and our goal of estimating Boltzman averages. In Section \ref{section: methodology}, we describe the construction of our SMC method and how to use it for sampling protein backbone segments. Section \ref{section: simus} presents two simulation studies on the performance of the SMC method: Section \ref{section: simu_one} investigates the role of the upsample size $M$; Section \ref{section: simu_two} illustrates the numerical convergence of our SMC estimates. These simulations also demonstrate the inefficacy of naive importance sampling and SISR for the protein sampling problem. In Section \ref{section: application}, we apply the proposed SMC method to estimate the number of atomic contacts for a key segment of the SARS-CoV-2 viral spike protein. In Section \ref{section: conclusion}, we briefly summarize the paper and its contributions, and discuss some potential future directions. Proofs of the theorems are provided in the Supplementary Materials.

\section{Motivating Application: Estimating Protein Structural Quantities}\label{section: motivation}

\subsection{Overview of protein structure}

Proteins have a crucial role in carrying out biological processes and their functions are dependent on their 3-D structures. A protein consists of a sequence of amino acids, where successive amino acids are connected by peptide bonds.  Four atoms including N, $\text{C}_\alpha$, C and O are common to each of the 20 different amino acid types and compose the backbone of the protein that can be visualized as 

\begin{center}
    \chemfig{\cdots--{C_\alpha}_t(-[2]R_t)--{C}_t(=[2]O_t)--N_{t+1} -- {C_\alpha}_{t+1}(-[2]R_{t+1})--\cdots}
\end{center}
for the amino acids with indices $t$ and $t+1$ in the amino acid sequence. Side chain groups extend from the C$_\alpha$ atoms (e.g., R$_t$ and R$_{t+1}$ in the visualized backbone structure) and distinguish 20 different types of amino acids.
Proteins generally adopt stable 3-D structures that are essentially determined by the sequence of the amino acids \citep{anfinsen1973principles}. At the same time, it is well-known that protein structures are not static; some dynamic movement is often observed 
\citep{bu2011proteins,fraser2009hidden,fraser2011accessing}. Thus, while the Protein Data Bank (PDB) \citep{berman2000protein} is the source of known protein structures from laboratory work, these should be considered as only static snapshots of the true structures.

An arrangement of the atoms of a protein in 3-D space is known as a \emph{conformation}. The positions (3-D coordinates) of the backbone atoms $a_t = (\text{C}_t, \text{O}_t, \text{N}_{t+1}, {\text{C}_\alpha}_{t+1})$ can be equivalently specified using dihedral angles $(\phi_t, \psi_t, \omega_t)$, where $\phi_t$ represents the dihedral angle of $\text{C}_{t-1} - \text{N}_t - \text{C}_\alpha{}_{t}$ and $\text{N}_t - \text{C}_\alpha{}_{t} - \text{C}_{t}$; $\psi_t$ represents the dihedral angle of $\text{N}_t - \text{C}_\alpha{}_{t} - \text{C}_{t}$ and $\text{C}_\alpha{}_{t} - \text{C}_{t} - \text{N}_{t+1}$; $\omega_t$ represents the dihedral angle of $\text{C}_\alpha{}_{t} - \text{C}_{t} - \text{N}_{t+1}$ and $\text{C}_{t} - \text{N}_{t+1} - \text{C}_\alpha{}_{t+1}$. Since bond lengths and bond angles are essentially fixed, $\phi_t$ governs the distance between C$_{t-1}$ and C$_{t}$; $\psi_t$ governs the distance between N$_{t}$ and N$_{t+1}$ and also determines the coordinates of O$_t$ which is connected to C$_t$ as part of the planar peptide bond. These peptide bonds connecting C$_{t}$ and N$_{t+1}$ are nearly non-rotatable, and thus the dihedral angle $\omega_t$ that governs the distance between ${\text{C}_\alpha}_{t}$ and ${\text{C}_\alpha}_{t+1}$ 
is usually close to 180°
\citep{esposito2005correlation}. In contrast, $\phi_t$ and $\psi_t$ can take a wide range of values over the continuous interval [-180°, 180°).  

In this paper, we focus on the 3-D structures of protein segments 
due to the important roles they can have in biological function, e.g., the highly dynamic region of the coronavirus spike protein binding with human cells (see real data example in Section \ref{section: application}). The structure prediction problem for segments, with the rest of the protein held fixed, has thus attracted much attention. The basic idea of computational structure prediction is to find the conformation with the lowest potential energy, which originates from 
the energy-landscape theory of protein folding \citep{onuchic1997theory}. Prediction methods for segments have included sequential sampling approaches \citep{tang2014fast, wong2018exploring}, robotics-inspired protocols  \citep{stein2013improvements}, and hybrid approaches that leverage databases of known structures \citep{marks2017sphinx}. These methods focus on predicting a single ``correct" conformation with the lowest potential energy. However, for exploring the full range of dynamic movement and estimating Boltzmann averages (according to a given energy function), new sampling approaches are needed.

We will show that the SMC method proposed in this paper is useful for tackling this sampling problem.
A backbone conformation for the segment of interest, consisting of amino acids $a_0, \ldots, a_{T}$, can be specified by the continuous state vector $\mathbf{x}_{0:T}$ where $\mathbf{x}_t = (\phi_{t}, \psi_{t}, \omega_t)$ for $t=0,\dots, T$. 
An energy function $H$ is used to evaluate the energies of conformations, and we assume the corresponding Boltzmann distribution characterizes their range of dynamic movement, 
\begin{eqnarray}
p(\mathbf{x}_{0:T}) \propto \exp \left\{-H(\mathbf{x}_{0:T})/\lambda\right\},  \label{pix}
\end{eqnarray}
where $H$ is a given energy function, and  $\lambda$ is the effective temperature which can be taken to be 1 by appropriately scaling $H$. Our SMC method is suitable for sampling conformations from Equation \eqref{pix} for estimating Boltzmann averages, i.e., we want to estimate $E[f(\mathbf{x}_{0:T})]$ with respect to the Boltzmann distribution, where $f$ is a protein structural quantity or statistic of interest  computed from a given conformation $\mathbf{x}_{0:T}$. 

\subsection{Protein structural quantities}\label{subsec:structquants}

In analyses of protein structures, various quantities or summary statistics are computed from conformations, depending on the intended application. The ${\text{C}_\alpha}$ backbone atoms often have an important role, e.g., contact prediction considers Euclidean distances among all pairs of ${\text{C}_\alpha}$'s  \citep{di2012deep,zheng2019deep}; the number of other ${\text{C}_\alpha}$ atoms within a sphere centered at an amino acid's ${\text{C}_\alpha}$ is counted for solvent surface estimation \citep{simons1997assembly}; the ${\text{C}_\alpha}$ atom is often chosen as a substitute for the entire amino acid in coarse-grain protein representations, to improve the computational efficiency of  structure and function prediction \citep{roy2010tasser}. Statistics involving \textit{atomic contacts}, i.e., the number of non-bonded atoms within a given radius of a selected atom, are also important:  \cite{karlin1999atom} investigated atom density in terms of the average number of neighboring C, O, and N atoms; \cite{pintar2002cx} introduced the CX algorithm to identify protruding atoms in terms of the number of non-hydrogen atoms within a given radius; \cite{mihel2008psaia} developed a software tool that integrates several algorithms, including CX, to compute protein quantities for protein interaction; \cite{ligeti2017cx} developed the Protein Core Workbench (PCW) server for explicit visualization of atomic contacts. Protruding atoms, i.e., those with fewer atomic contacts, tend to be on the surface of the protein and more frequently involved in binding activities \citep{pintar2002cx}.

Therefore, to illustrate our methodology we consider similar statistics that summarize aspects of protein segment conformations: (i) distances between pairs of ${\text{C}_\alpha}$ atoms, as used in contact prediction; (ii) number of atomic contacts of ${\text{C}_\alpha}$ atoms, within a radius of 7 \AA.

\section{Methodology}\label{section: methodology}
In this section we summarize the SMC strategy of \cite{fearnhead2003line}, followed by its generalization in the form of our proposed SMC method for sampling from multivariate continuous distributions. We then present the method as applied to sampling protein segment conformations from the Boltzmann distribution for a given energy function $H$.

\subsection{Review of Fearnhead and Clifford's SMC method}

We review the relevant details of  \cite{fearnhead2003line} in this subsection. Consider a hidden Markov model setup with hidden process $\{\mathbf{x}_t\}_{t=0}^T$, each with finite state space $\mathcal{X}$ such that $|\mathcal{X}| = M < \infty$, observations $\{y_t\}_{t=1}^T$, transition probability $q(\mathbf{x}_t\mid \mathbf{x}_{t-1})$ and observation probability $l(y_t \mid \mathbf{x}_t)$ for $0<t\leq T$. Given a set of weighted particles $\{\mathbf{x}_{0:t-1}^{(n)}\}_{n=1}^N$ up to index $t-1$, each $\mathbf{x}_{0:t-1}^{(n)}$ produces $M$ distinct descendants, denoted by $\mathbf{x}_{t}^{(n,m)}$, one for each value of $\mathcal{X}$. The weight of the propagated particle $\mathbf{x}_{0:t}^{(n,m)} = (\mathbf{x}_{0:t-1}^{(n)}, \mathbf{x}_{t}^{(n,m)})$ satisfies
$$
w(\mathbf{x}_{0:t}^{(n,m)}) \propto w(\mathbf{x}_{0:t-1}^{(n)}) q(\mathbf{x}_t^{(n,m)}\mid \mathbf{x}_{0:t-1}^{(n)}) l(y_t\mid \mathbf{x}_t^{(n,m)}).
$$ To resample $N$ particles from the $MN$ candidates, the constant $c_t$ in
\begin{equation}
    \sum_{n=1}^N\sum_{m=1}^M \min(c_tw(\mathbf{x}_{0:t}^{(n,m)}), 1) = N \label{eq:threshold}
\end{equation}
is solved. Let $L$ be the number of particles with weights greater than or equal to $1/c_t$; these $L$ particles are all preserved, and stratified sampling \citep{carpenter1999improved} is used to resample another $N-L$ particles without replacement. \cite{fearnhead2003line} showed that this downsampling scheme minimizes
$$
E\left[\sum_{n=1}^N\sum_{m=1}^M(Q(\mathbf{x}_{0:t}^{(n,m)})-\gamma_{t}^{(n,m)})^2\right]
$$
where $\gamma_{t}^{(n,m)}$ denotes the weight with respect to $p_t$ and $Q(\mathbf{x}_{0:t}^{(n,m)})$ is the 
stochastic weight of the sampled particle, and thus is optimal among downsampling schemes.

\subsection{A general upsampling-downsampling SMC framework} \label{sec:proposed:methodology}
Our goal is to sample from a general multivariate distribution $p(\mathbf{x}_{0:T})$ where $\mathbf{x}_t$ is a random vector with continuous support, with the help of a proposed upsampling-downsampling SMC framework. 
In this situation, it is impossible to explore every value of $\mathcal{X}$ as in \cite{fearnhead2003line}, so we use importance distributions $\eta(\mathbf{x}_0), \eta(\mathbf{x}_1\mid \mathbf{x}_0),\dots, \eta(\mathbf{x}_T\mid \mathbf{x}_{0:T-1})$ to facilitate sampling.

Assume for a prechosen particle size $N$, upsample size $M$, and $t>0$, we have a set of generated particles $\{\mathbf{x}_{0:t-1}^{(n)}\}_{n=1}^N$ with 
weights $\{w(\mathbf{x}_{0:t-1}^{(n)})\}_{n=1}^N$. For each $\mathbf{x}_{0:t-1}^{(n)}$, we sample $M$ descendants $\{\mathbf{x}_{t}^{(n,m)}\}_{m=1}^M$ from $\eta(\mathbf{x}_{t} \mid \mathbf{x}_{0:t-1}^{(n)})$ and define $\mathbf{x}_{0: t}^{(n, m)}=(\mathbf{x}_{0:t-1}^{(n)}, \mathbf{x}_{t}^{(n,m)})$ with
\begin{equation}
    w(\mathbf{x}_{0:t}^{(n,m)}) \propto w(\mathbf{x}_{0:t-1}^{(n)})  \frac{p_t(\mathbf{x}_{0: t}^{(n, m)})}{p_{t-1}(\mathbf{x}_{0: t-1}^{(n)})\eta(\mathbf{x}_{t}^{(n,m)} \mid \mathbf{x}_{0:t-1}^{(n)})}.
    \label{eq:upsampled_weight}
\end{equation}
These $MN$ propagated particles with the upsampled weights in Equation \ref{eq:upsampled_weight} are then carried to the downsampling step to resample $N$ particles.

The downsampling step resamples $N$ particles from the upsampled particles $\{\mathbf{x}_{0:t}^{(m,n)}, n=1,\dots,N \text{ and } m=1,\dots,M\}$, which we shall denote by $\{\mathbf{x}_{0:t}^{(n)}\}_{n=1}^N$. There are two possible cases after each upsampling step: (i) at least $N$ of $w(\mathbf{x}_{0:t}^{(n,m)})$'s are positive and (ii) less than $N$ of $w(\mathbf{x}_{0:t}^{(n,m)})$'s are positive. Note that \cite{fearnhead2003line} do not consider case (ii) as the Gaussian observation likelihood in their application always produces positive weights after upsampling. Handling case (ii) ensures a valid algorithm when particle degeneracy is very high, e.g., protein structure sampling could lead to most particles having zero weights after an upsampling step.
Compared to case (i), case (ii) is rare in practice but may occur when choices for $M$ and $N$ are too extreme (see simulation study in Section \ref{section: simu_one}).

For case (i), the downsampling step follows the method of \cite{fearnhead2003line}. After the $t$-th upsampling step, the threshold $c_t$ in the equation
$$
\sum_{n=1}^{N} \sum_{m=1}^{M} \min \left\{c_t w(\mathbf{x}_{0:t}^{(n,m)}), 1\right\}=N
$$
is solved. Let $L$ be the number of particles whose weights are greater than or equal to $1/c_t$; these are preserved together with the weights $w(\mathbf{x}_{0:t}^{(n,m)})$. $N-L$ particles are sampled without replacement from the remaining $MN-L$ particles proportional to their weights $w(\mathbf{x}_{0:t}^{(n,m)})$ and are assigned new weights $1/c_t$. Therefore, after the downsampling step, each $\mathbf{x}_{0:t}^{(n,m)}$ has a stochastic downsampled weight $Q(\mathbf{x}_{0:t}^{(n,m)})$ as
\begin{equation}
    Q(\mathbf{x}_{0:t}^{(n,m)})=\left\{\begin{array}{ll}w(\mathbf{x}_{0:t}^{(n,m)}) / q(\mathbf{x}_{0:t}^{(n,m)}) & \text { with probability } q(\mathbf{x}_{0:t}^{(n,m)}) \\ \\ 0 & \text { otherwise }\end{array}\right.
    \label{eq:downsampled_weight}
\end{equation}
where $q(\mathbf{x}_{0:t}^{(n,m)}) = \min\{c_tw(\mathbf{x}_{0:t}^{(n,m)}),1\}$.

This resampling scheme ensures that the downsampled weights $Q(\mathbf{x}_{0:t}^{(n,m)})$ are proper with respect to $p_t$ 
and is optimal in terms of minimizing the expected squared error loss conditional on the upsampled particles. 
These results are formalized in Theorems \ref{thm:proper} and \ref{thm:optimal} below.
The $N$ downsampled particles are then the collection of those with realizations  $Q(\mathbf{x}_{0:t}^{(n,m)})  > 0$, which we set to be $\{\mathbf{x}_{0:t}^{(n)}\}_{n=1}^N$ with corresponding weights $\{w(\mathbf{x}_{0:t}^{(n)})\}_{n=1}^N$.

\begin{theorem} \label{thm:proper}
For \(t \in \{0,\dots,T\}\) and the auxiliary distribution \(p_t\), suppose there are at least \(N\) positive upsampled weights for all \(s \in \{0,\dots,t\}\). Let \(\mathbf{x}_{0:t}\) be any particle from the space \(\mathcal{X}^{t+1}\), then its downsampled weight \(Q(\mathbf{x}_{0:t})\) defined in Equation \eqref{eq:downsampled_weight} is proper with respect to \(p_t\). 
\end{theorem}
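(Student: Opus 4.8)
The plan is to prove Theorem~\ref{thm:proper} by induction on $t$, tracking the proper-weighting property through the two elementary operations performed at each index: the upsampling step governed by the importance kernels $\eta$ and the randomized downsampling step of \eqref{eq:downsampled_weight}. Throughout I will use the aggregated version of proper weighting --- a weighted collection $\{(\mathbf{x}^{(n)},w(\mathbf{x}^{(n)}))\}_n$ is proper for a distribution $\pi$ if $E[\sum_n f(\mathbf{x}^{(n)})w(\mathbf{x}^{(n)})] = c\,E_\pi[f]$ for a constant $c$ not depending on the test function $f$ --- since after downsampling the surviving particles have no one-to-one correspondence with their ancestors, and this aggregated identity is exactly what validates the estimator \eqref{eq:SMCapproximation}. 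For the base case, at index $0$ the candidates are i.i.d.\ draws from $\eta(\mathbf{x}_0)$ with weights $p_0/\eta$, which are proper for $p_0$ by the change-of-measure identity $\int f(\mathbf{x}_0)\frac{p_0(\mathbf{x}_0)}{\eta(\mathbf{x}_0)}\eta(\mathbf{x}_0)\,d\mathbf{x}_0 = E_{p_0}[f]$; the downsampling claim below then handles the resampling at index $0$ uniformly with all later indices.

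For the upsampling step, assume $\{(\mathbf{x}_{0:t-1}^{(n)},w(\mathbf{x}_{0:t-1}^{(n)}))\}_n$ is proper for $p_{t-1}$. Conditioning on the ancestors, each $\mathbf{x}_t^{(n,m)}\sim\eta(\cdot\mid\mathbf{x}_{0:t-1}^{(n)})$ independently, so substituting the upsampled weight \eqref{eq:upsampled_weight} the importance density cancels against the sampling density and
\[
E\Big[\sum_{n,m} f(\mathbf{x}_{0:t}^{(n,m)})\,w(\mathbf{x}_{0:t}^{(n,m)}) \,\Big|\, \{\mathbf{x}_{0:t-1}^{(n)}\}\Big] = M\sum_n w(\mathbf{x}_{0:t-1}^{(n)})\,g(\mathbf{x}_{0:t-1}^{(n)}),
\]
where $g(\mathbf{x}_{0:t-1}) = p_{t-1}(\mathbf{x}_{0:t-1})^{-1}\int f(\mathbf{x}_{0:t-1},\mathbf{x}_t)\,p_t(\mathbf{x}_{0:t-1},\mathbf{x}_t)\,d\mathbf{x}_t$ satisfies $E_{p_{t-1}}[g] = E_{p_t}[f]$ by Fubini. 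Taking the outer expectation and invoking the inductive properness with test function $g$ then shows the $MN$ upsampled particles, with weights \eqref{eq:upsampled_weight}, are proper for $p_t$.

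For the downsampling step, note first that $c_t$ is well defined exactly under the theorem's hypothesis: $c\mapsto\sum_{n,m}\min\{c\,w(\mathbf{x}_{0:t}^{(n,m)}),1\}$ is continuous and nondecreasing on $(0,\infty)$, vanishes as $c\to0^+$, and tends to the number of positive upsampled weights --- which is $\ge N$ by assumption --- as $c\to\infty$, so it equals $N$ at some $c_t$. Hence $c_t$, and therefore each $q(\mathbf{x}_{0:t}^{(n,m)}) = \min\{c_t w(\mathbf{x}_{0:t}^{(n,m)}),1\}$, is a deterministic function of the upsampled system. The $L$ particles with $q=1$ are kept with probability $1$, while the stratified without-replacement draw among the remaining $MN-L$ candidates --- whose $q$-values sum to exactly $N-L$ --- selects each of them with probability $c_t w(\mathbf{x}_{0:t}^{(n,m)}) = q(\mathbf{x}_{0:t}^{(n,m)})$; so by \eqref{eq:downsampled_weight}, $E[Q(\mathbf{x}_{0:t}^{(n,m)})\mid\text{upsampled system}] = q\cdot(w/q)+(1-q)\cdot 0 = w(\mathbf{x}_{0:t}^{(n,m)})$ for every $(n,m)$. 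Since the $N$ downsampled particles are precisely those with $Q>0$, carrying weight $Q$, the tower property gives $E[\sum_{n=1}^N f(\mathbf{x}_{0:t}^{(n)})w(\mathbf{x}_{0:t}^{(n)})] = E[\sum_{n,m} f(\mathbf{x}_{0:t}^{(n,m)})\,w(\mathbf{x}_{0:t}^{(n,m)})]\propto E_{p_t}[f]$ by the upsampling step, which closes the induction and is the asserted proper-weighting property.

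The main obstacle is the downsampling calculation: one must verify that under the stratified sampling-without-replacement scheme the \emph{marginal} inclusion probability of each candidate is exactly $q(\mathbf{x}_{0:t}^{(n,m)})$ --- treating separately the deterministically retained block of $L$ heavy particles and the stratified draw of the other $N-L$, and checking that the relevant stratum weights sum correctly so that stratified sampling realizes these marginals --- and that $c_t$ may be pulled out as a constant once one conditions on the full upsampled system. A secondary technical point, which I would record as a mild regularity condition on $p_{t-1}$, $p_t$ and $\eta$, is that the auxiliary function $g$ in the upsampling step retains enough integrability for the inductive proper-weighting hypothesis to be applicable to it.
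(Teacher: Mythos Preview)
Your argument is correct and shares the same essential ingredients as the paper's proof: induction on $t$, the change-of-measure cancellation for the importance kernel at the upsampling step, and the key identity $E[Q(\mathbf{x}_{0:t}^{(n,m)})\mid\text{upsampled system}]=w(\mathbf{x}_{0:t}^{(n,m)})$ for the downsampling step. The organization differs slightly: the paper carries as its inductive invariant the per-particle conditional identity $E[Q(\mathbf{x}_{0:t}^*)\mid\mathcal{S}_t]\propto p_t(\mathbf{x}_{0:t}^*)/\eta(\mathbf{x}_{0:t}^*)$ (where $\mathcal{S}_t$ collects all upsampled draws through time $t$), folding the upsampling and downsampling into a single inductive step and taking the outer $\eta$-expectation only at the end; you instead carry the aggregated unconditional identity and treat upsampling and downsampling as two separate properness-preserving operations, introducing the auxiliary test function $g$ to push the target from $p_{t-1}$ to $p_t$. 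Your decomposition is a bit more modular and you also add the continuity/monotonicity argument verifying that $c_t$ exists under the hypothesis of at least $N$ positive weights, which the paper leaves implicit; conversely, the paper's conditional invariant is slightly sharper than the aggregated one and matches more literally the theorem's phrasing ``its downsampled weight $Q(\mathbf{x}_{0:t})$ is proper.'' Substantively these are the same proof.
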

\begin{proof}
See Section A in the Supplementary Materials.
\end{proof}

\begin{theorem} \label{thm:optimal}
For \(t \in \{0,\dots,T\}\) and upsampled particles \(\{\mathbf{x}_{0:t}^{(n,m)}; n=1,\dots,N, m=1,\dots,M\}\), suppose there are at least \(N\) positive upsampled weights at step \(t\). When each individual downsampled weight \(Q(\mathbf{x}_{0:t}^{(n,m)})\) is assigned by Equation \eqref{eq:downsampled_weight} with the constraint that no more than \(N\) elements of \(\{Q(\mathbf{x}_{0:t}^{(n,m)}); n=1,\dots,N, m=1,\dots,M\}\) are positive, then the downsampled weights minimize the conditional expected squared error loss
\[E\left[\sum_{n=1}^N\sum_{m=1}^M(Q(\mathbf{x}_{0:t}^{(n,m)})-\gamma_{t}^{(n,m)})^2 \middle\vert\ \{\mathbf{x}_{0:t}^{(n,m)}; n=1,\dots,N, m=1,\dots,M\}\right]\] where \(\gamma_{t}^{(n,m)} = \left(p_t(\mathbf{x}^{(n,m)}_{0:t})/\eta(\mathbf{x}^{(n,m)}_{0:t})\right)/\sum_{n=1}^N\sum_{m=1}^M\left(p_t(\mathbf{x}^{(n,m)}_{0:t})/\eta(\mathbf{x}^{(n,m)}_{0:t})\right)\). 
\end{theorem}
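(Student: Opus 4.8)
The plan is to fix the $MN$ upsampled particles $\{\mathbf{x}_{0:t}^{(n,m)}\}$ together with their upsampled weights, so that these weights and the targets $\gamma_t^{(n,m)}$ are treated as known constants and all expectations below are conditional on them (the conditioning is suppressed in the notation). Write $w_i$ for $w(\mathbf{x}_{0:t}^{(n,m)})$, $q_i$ for the keep-probability $q(\mathbf{x}_{0:t}^{(n,m)})$, and $\gamma_i$ for $\gamma_t^{(n,m)}$, with $i$ indexing the $MN$ pairs $(n,m)$. Any downsampled weight of the form in Equation \eqref{eq:downsampled_weight} satisfies $E[Q_i] = w_i$ and $E[Q_i^2] = w_i^2/q_i$, so
\[
E\!\left[\sum_{i}(Q_i - \gamma_i)^2\right] \;=\; \sum_i \frac{w_i^2}{q_i} \;-\; 2\sum_i \gamma_i w_i \;+\; \sum_i \gamma_i^2 .
\]
The last two sums do not involve the keep-probabilities, so minimizing the conditional expected squared error loss is equivalent to minimizing $\sum_i w_i^2/q_i$; in particular the exact relationship between $w_i$ and $\gamma_i$ plays no role.

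Next I would translate the constraint ``at most $N$ of the $Q_i$ are positive (a.s.)'' into a constraint on the marginal keep-probabilities. The objective $\sum_i w_i^2/q_i$ depends on the joint law of $(Q_i)$ only through the marginals $q_i = P(Q_i>0)$, so it is enough to identify which vectors $(q_i)$ can be realized by some coupling that keeps at most $N$ particles almost surely. If at most $N$ are ever positive then $N \ge E[\#\{i:Q_i>0\}] = \sum_i q_i$; conversely, any $(q_i)$ with $q_i \in (0,1]$ and $\sum_i q_i \le N$ is realizable by keeping the particles with $q_i = 1$ deterministically and applying stratified resampling without replacement to the rest, which by the standard property of stratified resampling \citep{carpenter1999improved} keeps particle $i$ with probability exactly $q_i$ and keeps a total of at most $N$ particles. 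Hence the admissible set is $\{(q_i): 0 < q_i \le 1,\ \sum_i q_i \le N\}$, and since $\sum_i w_i^2/q_i$ is strictly decreasing in each $q_i$ the minimizer satisfies $\sum_i q_i = N$, a constraint that is feasible precisely because there are at least $N$ positive upsampled weights at step $t$.

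Finally I would solve the resulting convex program: minimize $\sum_i w_i^2/q_i$ subject to $0 < q_i \le 1$ and $\sum_i q_i = N$. The objective is strictly convex and separable, so the KKT conditions are necessary and sufficient for the global minimum; introducing a multiplier for the equality constraint and slack multipliers for the active upper bounds, stationarity forces $q_i \propto w_i$ on the coordinates with $q_i < 1$ and $q_i = 1$ on the rest, i.e., $q_i = \min\{c_t w_i, 1\}$ with $c_t$ the unique positive solution of $\sum_i \min\{c_t w_i,1\} = N$ (uniqueness because the left side is continuous and strictly increasing in $c_t$ until it saturates at the number of positive weights, which is $\ge N$). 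This is exactly the assignment in Equation \eqref{eq:downsampled_weight}, so that rule attains the minimum, which is the assertion of the theorem. I expect the second step to be the main obstacle: making precise that the almost-sure bound ``at most $N$ survivors'' is, for this objective, equivalent to the linear inequality $\sum_i q_i \le N$ — that is, exhibiting the stratified-resampling coupling that realizes the optimal marginals without ever exceeding $N$ surviving particles, and handling cleanly the preserved particles whose keep-probability is pinned at $1$.
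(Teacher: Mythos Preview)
Your argument is correct, but it proceeds differently from the paper's. The paper's proof works in the broader class where the survivor weight is an arbitrary random variable $C^{(n,m)}$ (so $Q^{(n,m)}=C^{(n,m)}$ with probability $q^{(n,m)}$ and $0$ otherwise): it first shows, term by term, that the conditional expected squared error is minimized when each $C^{(n,m)}$ is a constant, then invokes the proper-weighting requirement to pin that constant to $w^{(n,m)}/q^{(n,m)}$, and finally asserts (essentially deferring to \cite{fearnhead2003line}) that the resulting loss is minimized subject to $\sum_{n,m} q^{(n,m)}\le N$. You instead start directly from the unbiased parametric family $Q_i=w_i/q_i$ with probability $q_i$, reduce the objective to $\sum_i w_i^2/q_i$ plus constants, translate the ``at most $N$ survivors'' constraint into $\sum_i q_i\le N$ (with a realizability argument via stratified resampling), and then solve the convex program explicitly by KKT to obtain $q_i=\min\{c_t w_i,1\}$. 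What you lose relative to the paper is the preliminary reduction from random $C_i$ to constant $C_i$; what you gain is a fully explicit derivation of the optimal keep-probabilities, which the paper only states. Given that the theorem is phrased in terms of the specific form in Equation~\eqref{eq:downsampled_weight}, your restriction to that parametric family is defensible, and your treatment of the $q$-optimization is more complete than the paper's.
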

\begin{proof}
See Section B in the Supplementary Materials.
\end{proof}

For case (ii), we have less than $N$ particles with positive weights so cannot preserve $N$ distinct particles via downsampling; instead, resampling $N$ particles with replacement is needed, similar to the resampling step in SISR. We use multinomial resampling for simplicity, which mimicks the resampling scheme of the bootstrap filter \citep{gordon1993novel}, but other schemes (see Introduction) can also be used. The key difference between case (i) and (ii) is that there are duplicates of particles with equal weights after downsampling for case (ii).

Finally to initialize the algorithm, $MN$ realizations of $\mathbf{x}_0$, 
i.e. $\{\mathbf{x}_{0}^{(n,m)},n=1,\dots N \text{ and } m=1,\dots,M\}$, are first sampled from $\eta(\mathbf{x}_0)$ (each with normalized importance weight proportional to $p_0(\mathbf{x}_0)/\eta(\mathbf{x}_0)$), followed by a downsampling step to obtain $N$ properly weighted particles representing $p_0(\mathbf{x}_0)$.
Algorithm \ref{our_SMC} summarizes our SMC method.

\begin{algorithm}[t!]
\caption{SMC with upsampling-downsampling framework}\label{our_SMC}
Require: particle size $N$, upsample size $M$\;

Initial upsampling: Sample $\{\mathbf{x}_{0}^{(n,m)},n=1,\dots,N \text{ and } m=1,\dots,M\}$ from $\eta(\mathbf{x}_0)$, each with the weight $w(\mathbf{x}_{0}^{(n,m)}) \propto p_0(\mathbf{x}_0^{(n,m)})/\eta(\mathbf{x}_0^{(n,m)})$\;

Initial downsampling: Resample $N$ particles, denoted by $\{\mathbf{x}_{0}^{(n)}\}_{n=1}^N$, from $\{{\mathbf{x}}^{(n,m)}_{0},n=1,\dots,N \text{ and } m=1,\dots M\}$ with weights $\{w(\mathbf{x}_{0}^{(n)})\}_{n=1}^N$ using the proposed scheme\;

\For{$t=1,\dots,T$}{
  Upsampling: Sample $\{{\mathbf{x}}^{(n,m)}_{t}\}_{m=1}^M$ from $\eta(\mathbf{x}_{t}\mid \mathbf{x}_{0:t-1}^{(n)})$ and set ${\mathbf{x}}^{(n,m)}_{0:t} = (\mathbf{x}_{0:t-1}^{(n)}, {\mathbf{x}}^{(n,m)}_{t})$ for each $n$, each with a weight $w({\mathbf{x}}^{(n,m)}_{0:t}) \propto w({\mathbf{x}}^{(n)}_{0:t-1})p_t(\mathbf{x}_{0: t}^{(n, m)})/(p_{t-1}(\mathbf{x}_{0: t-1}^{(n)})\eta(\mathbf{x}_{t}^{(n,m)} \mid \mathbf{x}_{0:t-1}^{(n)}))$\;
  Downsampling: Resample $N$ particles, denoted by $\{\mathbf{x}_{0:t}^{(n)}\}_{n=1}^N$, from $\{{\mathbf{x}}^{(n,m)}_{0:t},n=1,\dots,N \text{ and } m=1,\dots M\}$ with weights $\{w(\mathbf{x}_{0:t}^{(n)})\}_{n=1}^N$ using the proposed scheme\;
}
\end{algorithm}

\subsection{Sampling conformations from the Boltzmann distribution}\label{subsec:energy_function}

In the protein sampling context, we let $\mathbf{x}_t$ represent the dihedral angles $(\phi_{t}, \psi_{t}, \omega_t)$ with $\mathcal{X}=$ \mbox{(-180°, 180°$]^3$}. Since $\mathbf{x}_{0:T}$ corresponds directly to a backbone conformation for the segment of amino acids $a_0, \ldots, a_{T}$, the joint distribution of the dihedral vector $p(\mathbf{x}_{0:T})$ is the target distribution. The energy function $H$ in Equation \eqref{pix} is assumed to be given, consisting of three components as in \cite{wong2018exploring}:
\begin{itemize}
    \item \textit{Atomic interactions} $-$ the energy of atomic interactions (denoted by $H_{a}$) is calculated from the pairwise distances between atoms (atoms that either share bonds, or belong to the same amino acid, are excluded). We use the distance-scaled, finite ideal-gas reference (DFIRE) state potential developed by \cite{zhou2002distance}, which assigns a score to the pairwise distance for each atom type. If a conformation has \textit{steric clashes}, i.e., overlapping atoms too close in space (as defined by a value of 8 in the DFIRE table), we set $H_{a} = \infty$ so that it will have an assigned weight of zero and be discarded.

    \item \textit{Dihedral angles} $-$ the energy of backbone dihedral angles (denoted by $H_\theta$) is calculated from the values of $(\phi_t, \psi_t, \omega_t)$. \cite{wong2017fast} developed empirical distributions $\tilde{p}$ for $(\phi_t, \psi_t, \omega_t)$ based on historical protein data, independently for each amino acid with
    $$
    \tilde{p}(\phi_t, \psi_t, \omega_t)  = \tilde{p}(\phi_t,\psi_t)\tilde{p}(\omega_t),
    $$
    where $\tilde{p}(\phi_t,\psi_t)$ is based on the amino acid type $a_{t}$ and consists of discrete bins of every 5° for each angle (which we assume to be uniformly distributed  within each bin), so each empirical distribution is stored in a 72 by 72 matrix and is visualized similar to Ramachandran plots \citep{Ramachandran1963stereochemistry}; $\tilde{p}(\omega_t)$ is a Gaussian distribution with mean 180° and standard deviation 3° independent of $(\phi_t, \psi_t)$. Following that work, we then set $H_\theta(\mathbf{x}_t) = -\log[ \tilde{p}(\phi_t, \psi_t, \omega_t)]$, which helps guide propagation to more likely regions in the dihedral space.
    
    \item \textit{Geometric (or closure) constraints} $-$ an indicator (denoted by $\mathcal{I}_t$) for the feasibility of the distance between the current amino acid $\mathbf{x}_t$ and the $\text{C}_{\alpha}$ atom after the end of the segment. $\mathcal{I}_t=0$ indicates that it is geometrically infeasible for the segment to form a continuous backbone structure with the rest of the protein, and should be discarded. The allowable range of distances for ${\text{C}}_{t}$ to ${\text{C}_\alpha}_{T+2}$ and ${\text{C}_\alpha}_{t+1}$ to ${\text{C}_\alpha}_{T+2}$ where we set $\mathcal{I}_t=1$ are obtained from \cite{wong2017fast}.
\end{itemize}

Combining the three components, the incremental energy of $\mathbf{x}_{t}$ for $t>0$ is defined as
\begin{equation}
    {H}(\mathbf{x}_{t} \mid \mathbf{x}_{0:t-1}) =\left\{\begin{array}{ll} H_{a}(\mathbf{x}_{t} \mid \mathbf{x}_{0:t-1})/10 + H_\theta(\mathbf{x}_t)  & \mbox{if }\mathcal{I}_t=1 \\ \\ \infty & \mbox{if }\mathcal{I}_t=0\end{array}\right.
    \label{eq:proposed_energy_function}
\end{equation}
where $H_{a}(\mathbf{x}_{t} \mid \mathbf{x}_{0:t-1})$ is the energy for the atomic interactions of $\text{C}_{t}$, $\text{O}_{t}$, $\text{N}_{t+1}$, and ${\text{C}_\alpha}_{t+1}$ corresponding to the sampled $\mathbf{x}_{t}$, and $1/10$ is the relative weight assigned to the $H_{a}$ component following \cite{wong2018exploring}. The auxiliary distribution $p_t$ is then
\begin{align*}
    p_t(\mathbf{x}_{0: t}) &\propto \exp{(-H_{a}(\mathbf{x}_{0})/10)}\tilde{p}(\mathbf{x}_{0})\mathcal{I}_0\prod_{s=1}^t \exp{(-H_{a}(\mathbf{x}_{s} \mid \mathbf{x}_{0:s-1})/10)}\tilde{p}(\mathbf{x}_{s})\mathcal{I}_s.
\end{align*}

It is convenient to set $\eta(\mathbf{x}_{t}\mid \mathbf{x}_{0:t-1}) = \tilde{p}(\mathbf{x}_{t})$ in the proposed framework: to draw $\mathbf{x}_{t}$ from $\tilde{p}(\mathbf{x}_{t})$, we sample a bin for $(\phi, \psi)$ from its empirical distribution then draw a $(\phi_t, \psi_t)$ value uniformly from the selected bin, and $\omega_t$ is drawn from its normal distribution.
Equation \eqref{eq:upsampled_weight} for upsampled weights thus simplifies to
\begin{align}
    w(\mathbf{x}_{0:t}) &\propto w(\mathbf{x}_{0:t-1})  \frac{p_t(\mathbf{x}_{0:t})}{p_{t-1}(\mathbf{x}_{0: t-1}) \eta(\mathbf{x}_{t} \mid \mathbf{x}_{0:t-1})}\notag \\&= w(\mathbf{x}_{0:t-1})  \exp{(- H_{a}(\mathbf{x}_{t} \mid \mathbf{x}_{0:t-1})/10)}  \mathcal{I}_t.
    \label{eq:special_increment}
\end{align}

\section{Simulation studies}\label{section: simus}

\subsection{Simulation I: SMC performance for different values of  $M$}\label{section: simu_one}

The upsample size $M$ plays an important role in the upsampling-downsampling framework.
For a finite space $\mathcal{X}$, a natural choice for $M$ can be to take $M=|\mathcal{X}|$ \citep[e.g.,][]{fearnhead2003line}. However, there may not be an intuitive choice for $M$ for a continuous space $\mathcal{X}$. Since the computational complexity of our SMC method is $O(MN)$, we investigate how $M$ influences the performance of Algorithm \ref{our_SMC} for fixed values of $MN$, by using a protein segment and its structural quantities as an illustrative example. 

In this simulation, we considered sampling the length 10 segment at amino acid positions 282--291 of the protein with PDB ID \textit{1ds1A}, i.e., the atoms $\text{C}_{282}, \text{O}_{282}, \text{N}_{283}, {\text{C}_\alpha}_{283}, \dots$, $\text{C}_{291}, \text{O}_{291}, \text{N}_{292}, {\text{C}_\alpha}_{292}$ with the rest of the protein held fixed. As protein structural quantities (see Section \ref{subsec:structquants}), we considered estimating the Boltzmann average for the number of atomic contacts of ${\text{C}_\alpha}$ at each of positions 283--292, which we denote by $n({\text{C}_\alpha}_{283}), \ldots, n({\text{C}_\alpha}_{292})$. The values of $M$ and $N$ used for the simulation were $MN \in \{10^5, 5 \times 10^5, 10^6\}$ and $M \in \{1, 5, 10, 20, 50, 100, 200, 500, 1000\}$. For each combination of $M$ and $N$, we ran 100 independent repetitions of our SMC method.

For $M \in \{1, 200, 500, 1000\}$, we found that some of the 100 repetitions ended prematurely due to no particles having positive weights (i.e., $H_a = \infty$ or $\mathcal{I}_t=0$ for all propagated particles, see Section \ref{subsec:energy_function}). Note that taking $M=1$ is practically equivalent to SISR in this example, since case (ii) resampling was needed at each $t$. This result indicates that SISR is not a suitable algorithm for the protein sampling application; $M=1$ produced many duplicates among the $N$ particles at each $t$, and this insufficient exploration of the target distribution could eventually lead to all dead ends. For large $M$ (200, 500, 1000), the corresponding $N$ was too small to preserve a sufficient number of particles, which likewise could lead to all dead ends. While increasing $MN$ could potentially avoid this issue, doing so is computationally expensive.

For $M \in \{5, 10, 20, 50, 100\}$, all SMC repetitions could reliably finish and we computed the variance of the 100 estimates for each quantity. The results are presented in Table \ref{table:comparison_MN}. We observe that the variances of the estimates for a given value of $M$ decrease as $MN$ increases, as expected. Furthermore, for each fixed value of $MN$, the variances tend to decrease as $M$ increases from 5 to 20, and then increase as $M$ increases from 20 to 100. These results suggest that taking $M=20$ is a good choice for the protein sampling problem, which we use in the subsequent simulation and real data example.

\begin{table}[t!]
    \caption{Variances of the SMC estimates over 100 repetitions, when estimating the Boltzmann average of the number of atomic contacts of ${\text{C}_\alpha}$ at each position from 283 to 292, for different combinations of $MN$ and $M$.}
\resizebox{\linewidth}{!}{\label{table:comparison_MN}\par
\centering
    \begin{tabular}{cccccccccccc}
    \hline
     & $M$ & $n(\text{C}_\alpha{}_{283})$ & $n(\text{C}_\alpha{}_{284})$ & $n(\text{C}_\alpha{}_{285})$ & $n(\text{C}_\alpha{}_{286})$ & $n(\text{C}_\alpha{}_{287})$ & $n(\text{C}_\alpha{}_{288})$ & $n(\text{C}_\alpha{}_{289})$ & $n(\text{C}_\alpha{}_{290})$ & $n(\text{C}_\alpha{}_{291})$ & $n(\text{C}_\alpha{}_{292})$\\ \hline
    $MN=100000$ & 5 & 0.277 & 1.557 & 3.547 & 10.004 & 3.525 & 3.467 & 31.481 & 8.299 & 4.119 & 0.599\\ 
        & 10 & 0.160 & 0.959 & 3.378 & 10.335 & 2.525 & 4.194 & 25.199 & 10.376 & 3.843 & 0.680\\
        
        & 20 & 0.181 & 0.814 & 2.695 & 7.421 & 2.633 & 2.269 & 12.973 & 9.321 & 3.119 & 0.453\\ 
        & 50 & 0.313 & 1.908 & 4.237 & 10.802 & 3.890 & 3.157 & 19.476 & 10.196 & 4.070 & 0.576\\ 
        & 100 & 0.428 & 3.117 & 7.379 & 20.195 & 6.050 & 6.695 & 26.506 & 14.656 & 5.471 & 0.792\\ 
         \hline
        
        $MN=500000$ & 5 & 0.083 & 0.671 & 1.198 & 5.014 & 1.179 & 1.610 & 13.895 & 3.041 & 1.417 & 0.346\\ 
        & 10 & 0.063 & 0.560 & 1.183 & 3.489 & 1.167 & 1.421 & 9.391 & 3.400 & 1.443 & 0.258\\ 
        & 20 & 0.034 & 0.202 & 0.578 & 2.042 & 0.696 & 0.749 & 3.629 & 2.828 & 0.856 & 0.113\\ 
        & 50 & 0.061 & 0.345 & 1.379 & 2.542 & 0.945 & 0.799 & 4.270 & 2.824 & 1.105 & 0.136\\
        & 100 & 0.139 & 0.697 & 1.812 & 5.946 & 1.696 & 2.192 & 8.269 & 5.148 & 1.355 & 0.228\\ 
         \hline
        
        $MN=1000000$ & 5 & 0.051 & 0.374 & 0.637 & 2.504 & 0.523 & 1.058 & 6.682 & 1.914 & 0.590 & 0.096\\ 
         & 10 & 0.035 & 0.234 & 0.499 & 1.741 & 0.505 & 0.860 & 5.571 & 1.258 & 0.527 & 0.147\\ 
        & 20 & 0.024 & 0.132 & 0.556 & 0.762 & 0.315 & 0.226 & 1.222 & 1.327 & 0.388 & 0.070\\
        
        & 50 & 0.036 & 0.179 & 0.535 & 1.421 & 0.521 & 0.482 & 2.498 & 1.504 & 0.512 & 0.065\\
        
        & 100 & 0.058 & 0.369 & 1.070 & 2.529 & 0.744 & 1.088 & 3.364 & 2.319 & 0.857 & 0.129\\

        \hline
    \end{tabular}}
\end{table}

\subsection{Simulation II: SMC convergence and comparison with importance sampling}\label{section: simu_two}

As a second simulation study, we show numerically that the Boltzmann averages estimated by our SMC method converge as $N$ increases. In doing so, we also illustrate the inefficiency of naive importance sampling for the protein application. We sample the length 4 segment at amino acid positions 85--88 of the protein with PDB ID \textit{1cruA}, i.e., the atoms $\text{C}_{85}, \text{O}_{85}, \text{N}_{86}, {\text{C}_\alpha}_{86}, \dots, \text{C}_{88}, \text{O}_{88}, \text{N}_{89}, {\text{C}_\alpha}_{89}$. As protein structural quantities, we considered: the distance between ${\text{C}_\alpha}_{86}$ and ${\text{C}_\alpha}_{89}$, denoted by $d({\text{C}_\alpha}_{86}$,${\text{C}_\alpha}_{89})$; the number of atomic contacts of ${\text{C}_\alpha}$ at each of positions 86--89.

By selecting a short (length 4) segment, we can obtain a good approximation to the ground truth by importance sampling. To do so, we 
repeatedly draw samples of the entire vector $\mathbf{x}_{0:3}$ from the importance distributions $\eta(\mathbf{x}_0), \eta(\mathbf{x}_1\mid \mathbf{x}_0),\dots, \eta(\mathbf{x}_3\mid \mathbf{x}_{0:2})$, i.e., the empirical distributions for the dihedral angles (Section \ref{subsec:energy_function}), and evaluate $H(\mathbf{x}_{0:3})$ to obtain the importance weights. Even for our short segment, this naive method is very inefficient: only $\sim$0.6\% of samples satisfied the geometric constraint ($\mathcal{I}_t=1$ for each $t$); of those, $\sim$8.6\% were free of steric clashes ($H_a < \infty$). Thus, effectively $\sim$2 million draws from $\eta$ were needed to produce 1000 valid samples of $\mathbf{x}_{0:3}$ with positive weights in this case, for an overall rejection rate of $99.95\%$. The probability of obtaining a valid sample with importance sampling would further decrease with longer segments, so would not be a computationally feasible way to estimate Boltzmann averages in general.

Table \ref{tab:benchmark} displays the ground truth for the Boltzmann averages of the five quantities of interest, as approximated from a total of $10^6$ samples (with positive weights) generated using the naive importance sampling approach. To verify that $10^6$ samples provide a good proxy for the ground truth here, we randomly divided them into two subgroups of 500,000 samples, repeating 50 times to form a total of 100 groups of 500,000 samples. The same quantities were estimated from each of the 100 groups. For each quantity, the standard deviations of these estimates was less than $0.05\%$ of the Boltzmann average computed from the full $(10^6)$ sample, which indicates the stability desired.

\begin{table}[t!]
\caption{Boltzmann averages of the five quantities of interest. The $10^6$ samples obtained from importance sampling were used as a proxy for the ground truth.} 
\label{tab:benchmark}\par
\centering
\begin{tabular}{ |p{2.5cm}|p{1.5cm}|p{1.5cm}|p{1.5cm}|p{1.5cm}|}
 \hline
  $d(\text{C}_\alpha{}_{86}$,$\text{C}_\alpha{}_{89})$ & $n(\text{C}_\alpha{}_{86})$ & $n(\text{C}_\alpha{}_{87})$ & $n(\text{C}_\alpha{}_{88})$ & $n(\text{C}_\alpha{}_{89})$\\
 \hline
  8.850 & 29.420 & 39.862 & 56.252 & 37.742\\
 \hline
\end{tabular}
\end{table}

Having established a proxy for the ground truth, we proceed to run our SMC method to estimate these same quantities. We used $M=20$ (as chosen in Section \ref{section: simu_one}) and a range of $N$ values from 1000 to 500,000, running 100 repetitions for each combination of $M$ and $N$. The RMSEs to the ground truth based on these SMC repetitions are shown in Table \ref{table:RMSE_of_SMC_20M}. The RMSEs can be seen to steadily decrease as the number of particles $N$ increases, reaching to $<0.015\%$ of the ground truth values at $N=500000$.

\begin{table}[t!]
\caption{RMSEs of the five quantities, based on 100 repetitions of  our SMC method with $M=20$ and different values of $N$.}\label{table:RMSE_of_SMC_20M}\par
\centering
\begin{tabular}{ |p{1.5cm}|p{2.5cm}|p{1.5cm}|p{1.5cm}|p{1.5cm}|p{1.5cm}|}
 \hline
 $N=$ & $d(\text{C}_\alpha{}_{86}$,$\text{C}_\alpha{}_{89})$ & $n(\text{C}_\alpha{}_{86})$ & $n(\text{C}_\alpha{}_{87})$ & $n(\text{C}_\alpha{}_{88})$ & $n(\text{C}_\alpha{}_{89})$\\
 \hline\hline
 1000 & 0.125 & 0.760 & 1.102 & 1.297 & 0.777\\
 2000 & 0.085 & 0.501 & 0.697 & 1.031 & 0.524\\
 5000 & 0.058 & 0.354 & 0.468 & 0.639 & 0.335\\
 10000 & 0.043 & 0.257 & 0.300 & 0.466 & 0.240\\
 20000 & 0.030 & 0.219 & 0.237 & 0.289 & 0.159\\
 50000 & 0.020 & 0.121 & 0.129 & 0.206 & 0.112\\
 100000 & 0.014 & 0.085 & 0.105 & 0.143 & 0.080\\
 200000 & 0.010 & 0.065 & 0.068 & 0.010 & 0.055\\
 500000 & 0.007 & 0.039 & 0.051 & 0.063 & 0.043\\
 \hline
\end{tabular}
\end{table}

Finally, we can quantify the gains in computational efficiency of our SMC method relative to  importance sampling. We generated 500, 1000, and 2000 valid samples (with positive weights) using the naive method, running 100 repetitions for each to estimate the same quantities. The RMSEs to the ground truth based on these importance sampling repetitions are shown in Table \ref{table:RMSE_of_Random_Sampling}. Comparing Tables \ref{table:RMSE_of_SMC_20M} and \ref{table:RMSE_of_Random_Sampling}, we see that the RMSEs for SMC with $N=10000$ and $M=20$ are smaller than for importance sampling with 2000 valid samples. On a single core of a Xeon Gold 6230 2.1 GHz processor, the former had a time cost of 140 seconds per repetition, while the latter required about 7 hours per repetition. We note that the gains from SMC would be even more pronounced for realistic longer segments, where importance sampling would suffer from increasingly high rejection rates.

\begin{table}[t!]
\caption{RMSEs of the five quantities, based on 100 repetitions of importance sampling with different sample sizes.}
\label{table:RMSE_of_Random_Sampling}\par
\centering
\begin{tabular}{ |p{1.5cm}|p{2.5cm}|p{1.5cm}|p{1.5cm}|p{1.5cm}|p{1.5cm}|}
 \hline
 Samples & $d(\text{C}_\alpha{}_{86}$,$\text{C}_\alpha{}_{89})$ & $n(\text{C}_\alpha{}_{86})$ & $n(\text{C}_\alpha{}_{87})$ & $n(\text{C}_\alpha{}_{88})$ & $n(\text{C}_\alpha{}_{89})$\\
 \hline\hline
 500 & 0.095 & 0.620 & 0.789 & 1.267 & 0.598\\
 1000 & 0.086 & 0.475 & 0.520 & 0.764 & 0.441\\
 2000 & 0.058 & 0.308 & 0.444 & 0.634 & 0.362\\
 \hline
\end{tabular}
\end{table}

\section{Example: Estimating atomic contacts in the SARS-CoV-2 spike protein}\label{section: application}

The COVID-19 pandemic was caused by the novel coronavirus SARS-CoV-2, with the first identified outbreak in Wuhan, China, in 2019 \citep{chen2020emerging}. This virus latches onto a human host cell via an interaction between its spike protein's receptor-binding domain (RBD) and the host cell's angiotensin-converting enzyme 2 (ACE2) receptor \citep{lan2020structure}. The conformational dynamics and flexibility of the RBD binding interface, i.e., the segments of the RBD involved in binding, has thus been of scientific interest towards the development of therapeutics. 
Four such segments of amino acids have been identified to compose the RBD--ACE2 binding interface, with positions 472--490 (known as Loop 3) and 495--506 (known as Loop 4) having the most dynamic movement due to their conformational flexibility
\citep{dehury2020effect,nguyen2020does,ali2020dynamics}.

Since the initial outbreak, laboratory work \citep[e.g.,][]{wrapp2020cryo} has provided PDB structures of the SARS-CoV-2 spike protein. These static snapshots of the spike protein have been useful as a starting point for the further study of its conformational dynamics; e.g., \cite{williams2021molecular} used PDB structures of the RBD in its prefusion state (i.e., prior to binding with ACE2) to study Loop 3. Their molecular dynamics simulations revealed that Loop 3 is the most flexible of the RBD segments, 
with the potential to adopt conformations that are notably different from the PDB structure. These findings could open avenues for drug design to block the RBD--ACE2 interaction by targeting these conformations. Both the Delta and Omicron variants of SARS-CoV-2 which became widespread had key mutations within Loop 3, which further reinforced its importance for scientists tackling COVID-19.

For this example, we took the same starting PDB structure as \cite{williams2021molecular} and applied our SMC method to sample backbone conformations of Loop 3. We then estimated the Boltzmann averages for the number of atomic contacts for each $\text{C}_\alpha$ in the segment, which may be interpreted as their relative degree of protrusion. Amino acid positions that have fewer contacts on average tend to be more exposed to the protein surface, and hence potentially more involved in binding activities. We ran our SMC method with $M=20$ (as in Section \ref{section: simu_one}) and $N = 500000$, and the estimated Boltzmann averages for $n({\text{C}_\alpha}_{473}), \ldots, n({\text{C}_\alpha}_{489})$ are plotted in Figure \ref{fig:sarscov2}. These results suggest that on average, position 478 is the most exposed amino acid (among several consecutive positions with high protrusion, namely 476--479), while position 483 protrudes the least. Interestingly, \cite{williams2021molecular} note that 476, 477, 478, and 483 were the top four positions were mutations were most commonly observed, and that the conformational flexibility of Loop 3 appeared to be resilient to single mutations at these positions.

\begin{figure}[h]
\centering
\includegraphics[width=13cm]{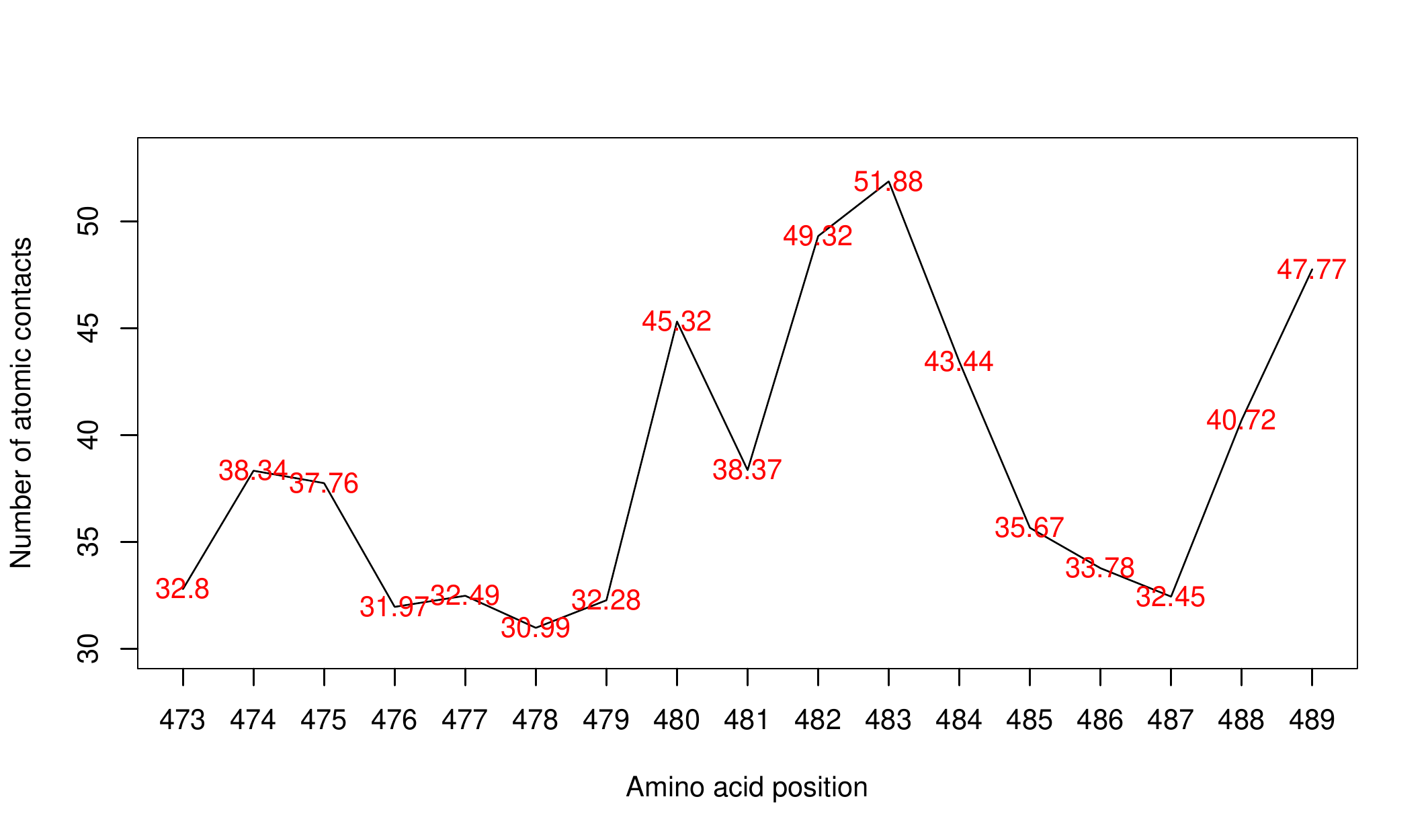}
\caption{Estimated Boltzmann averages for the number of atomic contacts of $\text{C}_\alpha$ atoms at amino acid positions 473 to 489 of the SARS-CoV-2 spike protein, using the samples obtained from running our SMC method with $N=500000$ and $M=20$.}
\label{fig:sarscov2}
\end{figure}

\section{Conclusion and Discussion}\label{section: conclusion}

In this paper, we proposed an SMC method that features an upsampling-downsampling framework, with emphasis on sampling backbone conformations of protein segments from the Boltzmann distribution. Previous methods for protein structure prediction were usually designed to search for the conformations with the lowest energy, without regard for properly weighted samples. In contrast, our SMC samples can be used to estimate Boltzmann averages over the full range of dynamic movement, according to a given energy function. Furthermore, previous SMC methods (such as SISR) were inapplicable for this protein sampling problem due to the particle degeneracy encountered, which highlights the usefulness of our approach. We showed the theoretical validity of the framework, and demonstrated its performance via simulation studies and an illustrative example using a key segment of the SARS-CoV-2 spike protein.

Compared to SISR with a given particle size $N$, our SMC method can provide more accurate estimates of Boltzmann averages but comes with a price: with upsample size $M$ its computational complexity is $O(MN)$ compared to $O(N)$ for SISR.  Thus, choosing a value of $M$ involves a tradeoff between computing speed and exploration effort of high-density regions. As illustrated in Section \ref{section: simu_one}, it is sensible to fix the computational budget $MN$ before selecting $M$ for a given application. While there is no universally optimal choice of $M$ for all SMC applications, we can provide some intuition. When the target distribution is more uniform, a larger $N$ is preferred; in the extreme case where the target distribution is uniform over the support, the optimal choice is clearly $M=1$ to maximize the number of samples $N$ for a fixed $MN$. In contrast when the target distribution has many sharp local modes or is highly constrained, 
a larger $M$ is preferred to ensure that SMC sufficiently explores the regions with positive density during propagation steps, and a specific value of $M$ could be chosen according to Monte Carlo variance. Therefore, while the upsampling-downsampling SMC framework is generally applicable, its efficacy will depend on the target distribution of interest. When sampling protein segments, steric clashes and geometric constraints lead to a rough energy surface with many local modes. This situation is well-suited for our upsampling-downsampling framework, whereas other SMC methods could fail.

Potential future applied work could involve protein analyses with different structural quantities or energy functions. It is straightforward to adapt our SMC method for this purpose. The sampling framework could also be extended to consider both the backbone and the side chains. One approach would be to sample the side chains after sampling backbone conformations, effectively applying SMC twice and accumulating the energy contributions.

\section*{Acknowledgements}

We thank Martin Lysy and Glen McGee for constructive comments on the manuscript. This work was partially supported by Discovery Grant RGPIN-2019-04771 from the Natural Sciences and Engineering Research Council of Canada.
\par


\section*{Supplementary Materials}
The Supplementary Material contains the proofs of Theorems 1 and 2 in Section \ref{sec:proposed:methodology}.

\par

\bibhang=1.7pc
\bibsep=2pt
\fontsize{9}{14pt plus.8pt minus .6pt}\selectfont
\renewcommand\bibname{\large \bf References}
\expandafter\ifx\csname
natexlab\endcsname\relax\def\natexlab#1{#1}\fi
\expandafter\ifx\csname url\endcsname\relax
  \def\url#1{\texttt{#1}}\fi
\expandafter\ifx\csname urlprefix\endcsname\relax\def\urlprefix{URL}\fi

\bibliographystyle{chicago}      
\bibliography{reference}   

\newpage

\appendix

\section{Proper weighting of the proposed sampling scheme}\label{proof:A}
Recall $p_t(\mathbf{x}_{0:t})$ denotes the auxiliary distribution for step $t$ in Section 3.2 in this context. Let 
$\mathcal{S}_0 = \{\mathbf{x}_0^{(n_0)}; n_0=1,\dots, NM\}$ and $\mathcal{S}_t = \mathcal{S}_{t-1} \cup \{(\mathbf{x}_0^{(n_0)},\dots, \mathbf{x}_t^{(n_t)}); n_0=1,\dots, NM; n_1 = 1,\dots, M;\dots; n_t = 1,\dots, M\}$, i.e., $\mathcal{S}_t$ is the collection of upsampled particles up to time $t$. Any $\mathbf{x}_{0:t}^* = (\mathbf{x}_0^{(n_0^*)}, \dots, \mathbf{x}_t^{(n_t^*)}) \in \mathcal{S}_t$ satisfies $\mathbf{x}_{0:s}^* = (\mathbf{x}_0^{(n_0^*)}, \dots, \mathbf{x}_s^{(n_s^*)}) \in \mathcal{S}_t$ for $s\leq t$. Let $\eta(\mathbf{x}_{0:t})$ denote $\eta(\mathbf{x}_0)\prod_{s=1}^t \eta(\mathbf{x}_{s}  \mid \eta(\mathbf{x}_{0:s-1})$ and $\mathcal{Q}_{0:t}$ denote the $\sigma$-algebra generated by $Q(\mathbf{x}_{0:t})$ conditional on $\mathcal{S}_t$. 

When $t=0$, it is obvious that $E_{\mathcal{Q}_0}\left[ Q(\mathbf{x}_{0}^*) \mid \mathcal{S}_0 \right] \propto \frac{p_{0}(\mathbf{x}_{0}^*)}{\eta(\mathbf{x}_{0}^*)}$ and for any square integrable function $h_0(\mathbf{x}_{0})$,
\begin{align*}
    E_\eta\left\{ E_{\mathcal{Q}_0}\left[ h_0(\mathbf{x}_{0}^*)Q(\mathbf{x}_{0}^*) \mid \mathcal{S}_0 \right] \right\} &= E_\eta\left\{ h_0(\mathbf{x}_{0}^*) E_{\mathcal{Q}_0}\left[ Q(\mathbf{x}_{0}^*) \mid \mathcal{S}_0 \right] \right\}\\
    &\propto E_\eta\left\{ h_0(\mathbf{x}_{0}^*) \frac{p_{0}(\mathbf{x}_{0}^*)}{\eta(\mathbf{x}_{0}^*)} \right\}\\
    &= \int h_0(\mathbf{x}_{0}^*) \frac{p_{0}(\mathbf{x}_{0}^*)}{\eta(\mathbf{x}_{0}^*)} \eta(\mathbf{x}_{0}^*) d\mathbf{x}_{0}^*\\
    &= E_{p_0}\left[ h(\mathbf{x}_{0}) \right],
\end{align*}
which justifies the proper weighting condition.

We now proceed by induction: assume $E_{\mathcal{Q}_{0:t-1}}\left[ Q(\mathbf{x}_{0:t-1}^*) \mid \mathcal{S}_{t-1} \right] \propto \frac{p_{t-1}(\mathbf{x}_{0:t-1}^*)}{\eta(\mathbf{x}_{0:t-1}^*)}$ holds.
Conditional on $\mathcal{S}_{t}$ and $Q(\mathbf{x}_{0:t-1})$ for all $\mathbf{x}_{0:t-1} \in \mathcal{S}_{t-1}$, our SMC produces $Q(\mathbf{x}_{0:t}^*)$ as
$$Q(\mathbf{x}_{0:t}^*)  \propto\left\{\begin{array}{ll}
\frac{Q(\mathbf{x}_{0:t-1}^*)p_{t}(\mathbf{x}_{0:t}^*)}{p_{t-1}(\mathbf{x}_{0:t-1}^*)\eta(\mathbf{x}_{t}^* \mid \mathbf{x}_{0:t-1}^*)}\frac{1}{q(\mathbf{x}_{0:t}^*)} & \text { with probability } q(\mathbf{x}_{0:t}^*) \\ \\ 0 & \text { otherwise }\end{array}\right.$$ where $q(\mathbf{x}_{0:t}^*) = \min\{c_t \frac{Q(\mathbf{x}_{0:t-1}^*)p_{t}(\mathbf{x}_{0:t}^*)}{p_{t-1}(\mathbf{x}_{0:t-1}^*)\eta(\mathbf{x}_{t}^* \mid \mathbf{x}_{0:t-1}^*)},1\}$ with $c_t$ being the root of $$
\sum_{\mathbf{x}_{0:t} \in \mathcal{S}_t} \min \left\{c_t \frac{Q(\mathbf{x}_{0:t-1})p_{t}(\mathbf{x}_{0:t})}{p_{t-1}(\mathbf{x}_{0:t-1})\eta(\mathbf{x}_{t} \mid \mathbf{x}_{0:t-1})}, 1\right\}=N.
$$
We can rewrite $Q(\mathbf{x}_{0:t}^*)$ as $$Q(\mathbf{x}_{0:t}^*) \propto \frac{Q(\mathbf{x}_{0:t-1}^*)p_{t}(\mathbf{x}_{0:t}^*)}{p_{t-1}(\mathbf{x}_{0:t-1}^*)\eta(\mathbf{x}_{t}^* \mid \mathbf{x}_{0:t-1}^*)}\frac{1}{q(\mathbf{x}_{0:t}^*)} \mathbf{I}(\mathbf{x}_{0:t}^*),$$ where $\mathbf{I}(\mathbf{x}_{0:t}^*)$ denotes an indicator function defined conditional on $\mathcal{S}_t$ and $Q(\mathbf{x}_{0:t-1}^*)$ with 
$
E_\mathbf{I}[\mathbf{I}(\mathbf{x}_{0:t}^*) \mid Q(\mathbf{x}_{0:t-1}^*), \mathcal{S}_t] = q(\mathbf{x}_{0:t}^*).
$ Note that $\mathbf{I}(\mathbf{x}_{0:t}^*)$ is independent of future descendants, so for $0< r<s\leq t$, 
$$
E_\mathbf{I}\left[ \mathbf{I}(\mathbf{x}_{0:r}^*) \mid Q(\mathbf{x}_{0:r-1}^*),\mathcal{S}_{s} \right] = E_\mathbf{I}\left[ \mathbf{I}(\mathbf{x}_{0:r}^*) \mid Q(\mathbf{x}_{0:r-1}^*), \mathcal{S}_{r} \right].
$$

Now we can see that $Q(\mathbf{x}_{0:t}^*)$ is proportional to the product of the two random variables $Q(\mathbf{x}_{0:t-1}^*)$ and $\mathbf{I}(\mathbf{x}_{0:t}^*)$, with $\mathbf{I}(\mathbf{x}_{0:t}^*)$ conditional on $Q(\mathbf{x}_{0:t-1}^*)$. By the tower rule, we have
\begin{align*}
    E_{\mathcal{Q}_{0:t}}\left\{ Q(\mathbf{x}_{0:t}^*) \mid \mathcal{S}_{t} \right\} &\propto E_{\mathcal{Q}_{0:t-1}}\left\{ \frac{Q(\mathbf{x}_{0:t-1}^*)p_{t}(\mathbf{x}_{0:t}^*)}{p_{t-1}(\mathbf{x}_{0:t-1}^*)\eta(\mathbf{x}_{t}^* \mid \mathbf{x}_{0:t-1}^*)}\frac{1}{q(\mathbf{x}_{0:t}^*)} E_{\mathbf{I}}\left[ \mathbf{I}(\mathbf{x}_{0:t}^*) \mid Q(\mathbf{x}_{0:t-1}^*), \mathcal{S}_{t} \right] \mid \mathcal{S}_{t} \right\} \\
    &= E_{\mathcal{Q}_{0:t-1}}\left[\frac{Q(\mathbf{x}_{0:t-1}^*)p_{t}(\mathbf{x}_{0:t}^*)}{p_{t-1}(\mathbf{x}_{0:t-1}^*)\eta(\mathbf{x}_{t}^* \mid \mathbf{x}_{0:t-1}^*)} \mid \mathcal{S}_{t} \right]\\
    &= E_{\mathcal{Q}_{0:t-1}}\left[ Q(\mathbf{x}_{0:t-1}^*) \mid \mathcal{S}_{t} \right] \frac{p_{t}(\mathbf{x}_{0:t}^*)}{p_{t-1}(\mathbf{x}_{0:t-1}^*)\eta(\mathbf{x}_{t}^* \mid \mathbf{x}_{0:t-1}^*)}.
\end{align*}
Note that $Q(\mathbf{x}_{0:t-1}^*)$ is only conditional on $\mathcal{S}_{t-1}$ and thus independent of future descendants, so 
$$
E_{\mathcal{Q}_{0:t-1}}\left[ Q(\mathbf{x}_{0:t-1}^*) \mid \mathcal{S}_{t} \right] = E_{\mathcal{Q}_{0:t-1}}\left[ Q(\mathbf{x}_{0:t-1}^*) \mid \mathcal{S}_{t-1} \right]
$$
and thus
$$
E_{\mathcal{Q}_{0:t}}\left\{ Q(\mathbf{x}_{0:t}^*) \mid \mathcal{S}_{t} \right\} \propto E_{\mathcal{Q}_{0:t-1}}\left[ Q(\mathbf{x}_{0:t-1}^*) \mid \mathcal{S}_{t-1} \right] \frac{p_{t}(\mathbf{x}_{0:t}^*)}{p_{t-1}(\mathbf{x}_{0:t-1}^*)\eta(\mathbf{x}_{t}^* \mid \mathbf{x}_{0:t-1}^*)} \propto \frac{p_{t}(\mathbf{x}_{0:t}^*)}{\eta(\mathbf{x}_{0:t}^*)}.
$$
Therefore, for any square integrable function $h_t(\mathbf{x}_{0:t})$,
\begin{align*}
    E_\eta\left\{ E_{\mathcal{Q}_{0:t}}\left[h_t(\mathbf{x}_{0:t}^*)Q(\mathbf{x}_{0:t}^*) \mid \mathcal{S}_{t} \right]\right\} &= E_\eta\left\{ h_t(\mathbf{x}_{0:t}^*)E_{\mathcal{Q}_{0:t}}\left[Q(\mathbf{x}_{0:t}^*) \mid \mathcal{S}_{t} \right]\right\}\\
    &\propto E_\eta\left\{h_t(\mathbf{x}_{0:t}^*)\frac{p_{t}(\mathbf{x}_{0:t}^*)}{\eta(\mathbf{x}_{0:t}^*)}\right\}\\
    &= \int h_t(\mathbf{x}_{0:t}^*)\frac{p_{t}(\mathbf{x}_{0:t}^*)}{\eta(\mathbf{x}_{0:t}^*)}\eta(\mathbf{x}_{0:t}^*) d\mathbf{x}_{0:t}^*\\
    &= E_{p_t}\left[ h_t(\mathbf{x}_{0:t}) \right],
\end{align*}
which justifies the proper weighting condition.

\section{Minimization of the conditional expected squared error loss}\label{proof:B}
Fearnhead and Clifford (2003) have shown that in the downsampling step, only $N$ of the $Q(\mathbf{x}_{0:t}^{(n,m)})$'s are non-zero so that for 
some random variable $C_t^{(n,m)}$ with $\sigma$-algebra $\mathcal{C}$, we have
$$
Q(\mathbf{x}_{0:t}^{(n,m)})=\left\{\begin{array}{ll}C_t^{(n,m)} & \text { with probability } q(\mathbf{x}_{0:t}^{(n,m)}) \\ \\ 0 & \text { otherwise }\end{array}\right.
$$

Assume the set of particles $\{\mathbf{x}^{(n,m)}_{0:t},n=1,\dots,N \text{ and } m=1,\dots,M\}$ are obtained after upsampling and let $\gamma_{t}^{(n,m)} = \left(p_t(\mathbf{x}^{(n,m)}_{0:t})/\eta(\mathbf{x}^{(n,m)}_{0:t})\right)/\sum_{n=1}^N\sum_{m=1}^M\left(p_t(\mathbf{x}^{(n,m)}_{0:t})/\eta(\mathbf{x}^{(n,m)}_{0:t})\right)$, then the expected squared error loss for $Q(\mathbf{x}_{0:t}^{(n,m)})$ can be written as
\begin{align*}
    E_{\mathcal{Q}_{0:t}}\left[(Q(\mathbf{x}_{0:t}^{(n,m)})-\gamma_{t}^{(n,m)})^2 \mid \mathcal{S}_{t}\right]
     &= E_\mathcal{C}\left\{E_{\mathcal{Q}_{0:t}}\left[(Q(\mathbf{x}_{0:t}^{(n,m)})-\gamma_{t}^{(n,m)})^2\mid \mathcal{S}_{t}, C^{(n,m)}\right] \middle\vert\ \mathcal{S}_{t} \right\} \\
    &= q(\mathbf{x}_{0:t}^{(n,m)})\{E_\mathcal{C}(C^{(n,m)}-\gamma_{t}^{(n,m)} \mid \mathcal{S}_{t} )\}^2\\
    &+q(\mathbf{x}_{0:t}^{(n,m)})\text{Var}(C^{(n,m)} \mid \mathcal{S}_{t})+(1-q(\mathbf{x}_{0:t}^{(n,m)})){\gamma_{t}^{(n,m)}}^2,
\end{align*}
which is minimized when $C^{(n,m)}$ is a constant for each $n$ and $m$. To ensure proper weighting as shown in Section A of the supplement, we set $C^{(n,m)} = w(\mathbf{x}_{0:t}^{(n,m)})/q(\mathbf{x}_{0:t}^{(n,m)})$, and then the conditional expected squared error loss is minimized subject to $\sum_{n=1}^N\sum_{m=1}^M q(\mathbf{x}_{0:t}^{(n,m)}) \leq N$, since 
$$\text{min}\left\{E_{\mathcal{Q}_{0:t}}\left[\sum_{n=1}^N\sum_{m=1}^M(Q(\mathbf{x}_{0:t}^{(n,m)})-\gamma_{t}^{(n,m)})^2 \middle\vert\ \mathcal{S}_{t}\right]\right\} = \sum_{n=1}^N\sum_{m=1}^M(w(\mathbf{x}_{0:t}^{(n,m)})/q(\mathbf{x}_{0:t}^{(n,m)}) - \gamma_{t}^{(n,m)})^2.$$

\end{document}